%
\documentclass[11pt, runningheads]{llncs}
\usepackage[T1]{fontenc}
%
\usepackage{graphicx}
\usepackage{soul}
\usepackage[linesnumbered,ruled,vlined]{algorithm2e}

%
%

\usepackage[margin=1.2in]{geometry}

\usepackage[utf8]{inputenc}

\usepackage{amsmath}
\usepackage{amssymb}

\usepackage{algcompatible}

\usepackage{mathtools}
\usepackage{mathrsfs}
\DeclareMathAlphabet{\mathpzc}{OT1}{pzc}{m}{it}

\usepackage[dvipsnames]{xcolor}

\usepackage{blindtext}
\usepackage{hyperref}

\usepackage{csquotes}
\usepackage{array}

\usepackage[normalem]{ulem}

\ifdefined\DRAFT

	\newcommand{\yon}[1]{\textcolor{Green}{(Yonatan says: #1)}}
     
\else

	\newcommand{\yon}[1]{}
\fi

\newcommand{\parentOf}[1]{p(#1)}

\usepackage{todonotes}
\usepackage{cleveref}

\newcommand{\dkb}[1]{\textcolor{violet}{#1}}
\newcommand{\dinesh}[1]{\dkb{(Dinesh says: #1)}}

\usepackage[most]{tcolorbox}
\makeatletter
\newcommand*{\rom}[1]{\expandafter\@slowromancap\romannumeral #1@}
\makeatother

\begin{document}
\title{Fairly Wired: 
Towards Leximin-Optimal Division of Electricity}
%
%
%

\author{Eden Hartman\thanks{Eden Hartman and Dinesh Kumar Baghel 
 contributed equally to this paper.}\inst{1}\orcidID{0000-0001-5819-4432} \and
Dinesh Kumar Baghel
\inst{2}\orcidID{0000-0001-8518-7908} \and
Erel Segal-Halevi\inst{2}\orcidID{0000-0002-7497-5834}}
\institute{Bar-Ilan University, Israel 
\email{eden.r.hartman@gmail.com}
\and
Ariel University, Israel
\email{\{dinkubag21,erelsgl\}@gmail.com}
}

\maketitle      

\begin{abstract}
    In many parts of the world—particularly in developing countries—the demand for electricity exceeds the available supply.
    In such cases, it is impossible to provide electricity to all households simultaneously. This raises a fundamental question: how should electricity be allocated fairly?
    In this paper, we explore this question through the lens of egalitarianism—a principle that emphasizes equality by prioritizing the welfare of the worst-off households. One natural rule that aligns with this principle is to maximize the egalitarian welfare—the smallest utility across all households. 
    We show that computing such an allocation is NP-hard, even under strong simplifying assumptions.

    Leximin is a stronger fairness notion that generalizes the egalitarian welfare: it also requires to maximize the smallest utility, but then, subject to that, the second-smallest, then the third, and so on.
    The hardness results extends directly to leximin as well.
    Despite this, we present a Fully Polynomial-Time Approximation Scheme (FPTAS) for leximin in the special case where the network connectivity graph is a tree.
    This means that we can efficiently approximate leximin—and, in particular, the egalitarian welfare—to any desired level of accuracy.

    \keywords{leximin approximation  \and egalitarian allocation \and NP-hard \and electricity distribution \and Fair division \and Approximation algorithms}
    
\end{abstract}

%
%
%
\section{Introduction}\label{sec: Introduction}
Electricity is a public resource crucial for enhancing people's lives and fostering the development of both social and economic sectors. 
In principle, as electricity owned collectively by the citizens of a country, it ought to be distributed equally among all.
However, such distribution is not always possible as there are places in the world in which the demand is higher than the available supply. 
In Nigeria, for instance, $60-70\%$ of the population do not have access to electricity \cite{Oyedepo_2012}.
To compare, in 2022, the total energy production in the United Kingdom is $\approx 324K$ Gigawatt-hour  \cite{IEA_UK_Electricity_2022} for a population of approximately $67.6$M \cite{StatistaUKPopulation2022}; whereas the electricity production in Nigeria is  
$\approx 38K$ Gigawatt-hour 
 \cite{IEA_Nigeria_Electricity_2022} for a population of approximately $216.7$M people \cite{StatistaNigeriaPopulation2022}. 
Unfortunately, it is estimated that the problem is going to get even worst~\cite{Kaygusuz_2012}.

When facing this gap in electricity demand and supply, it is often impossible to supply electricity to all households at the same time.
In such scenario, authorities generally disconnect a significant part of their regions from the supply. This disconnection of a portion of region from the supply is known as \emph{load shedding}. 
The only objective of load shedding is to maintain the balance between demand and supply, and it usually done without considering fair aspects. 

Can the authorities fairly distribute electricity while maintaining the balance? 
This is the problem we study in this paper. 
Our goal is to develop a mechanism for distributing electricity between households in a fair manner.
That is, given the \emph{supply} (the amount of electricity available), the \emph{demand} (amount of electricity each household requires), and a graph describing the households' connections to the power station and to each other,
the mechanism should determine when each family should be connected, together with whom, and for how long.


%
%
We focus on the egalitarianism principle, which promotes equality by prioritizing the welfare of the least advantaged. Under this principle, we study two natural fairness rules.
The egalitarian rule, which selects a solution that maximizes the minimum utility; and
the leximin rule, which generalizes the egalitarian rule by first maximizes the minimum utility, then subject to that, the second-minimum, then the third, and so on.

%

We aim to take into account Geographic considerations. That is, we assume that there is a fixed graph describing the electricity distribution network, and the set of households receiving electricity at each point in time, together with the power station, must form a connected sub-graph of it.

We assume that the utility of an agent is positive only if the agent receives the requested demand; otherwise the utility is as good as receiving nothing. For example, at some time a household is running a washing machine which requires $1kW$ of electricity to function so a partial fulfillment of this demand will prohibit household to perform the activity.

\paragraph{Contributions.}
We show that, without connectivity constraints, it is NP-hard to compute an \emph{egalitarian allocation} of electricity --- an allocation that maximizes the smallest utility. As any leximin allocation is egalitarian  by definition, it is clearly NP-hard too\Cref{sec:hardness}.

Our main contribution is an FPTAS for the leximin allocation problem, for the case in which the electricity network is a tree. 
We present an algorithm that, given any vector of demands, any tree network and any $\epsilon\in (0,1)$, obtains an $(1-\epsilon)$-approximation to the leximin-optimal allocation (see \Cref{sec:model} for the exact definition), and runs in time polynomial in the problem size and in $1/\epsilon$ (\Cref{sec:fptas}).

The main technical tool we use
is a generic reduction from the problem of finding a (fractional) leximin-optimal allocation to finding a \emph{weighted utilitarian-optimal allocation} --- an allocation that maximizes a weighted sum of the agents' utilities 
\cite{hartman2025reducing}.
In our setting, computing a weighted utilitarian allocation  is equivalent to solving a variant of the Knapsack problem with particular connectivity constraints, which we call \emph{Geographic Knapsack}. This variant, like the original Knapsack problem, is NP-hard; we present an FPTAS for the case of a tree graph (\Cref{alg:SolveGeographicKnapsack}).
The reduction in \cite{hartman2025reducing} is robust in the sense that, given an approximation algorithm to the weighted utilitarian-optimal problem, it yields an approximation algorithm to the leximin-problem, with the same approximation factor; hence, our FPTAS for Geographic Knapsack yields the desired FPTAS for leximin-optimal electricity allocation.

\subsection{Related Work}
\label{sec: Related Literature}



\paragraph{A high level overview of electricity grid:}\footnote{Adapted from \cite{electricity-grid}.} Traditionally, electricity distribution grid is organized hierarchically. The power plant is at the top of this hierarchy. The power flows from the top to the transmission substation from which it flows further to the distribution grid. From there, power flows in different directions to distribution buses, and from there to individual households. Lines connect households to the bus. 

We did not find any literature on electricity distribution with the same Geographic constraint mentioned in this paper. So in this section, we are only considering the relevant literature at the bus level and at the household level.

\paragraph{Load shedding at bus level:}The authors in \cite{Pahwa_Scoglio_Das_Schulz_2013} presented three load shedding strategies at the bus level: homogeneous load shedding strategy, linear optimization, and tree load reduction heuristic. The proposed tree heuristics resolve the issue of other load shedding strategies. In tree load reduction heuristic, a tree is formed using the initially failed line, and then the tree heuristic reduces the same percentage of load from a subset of lines in the tree. They argued that tree heuristics do not work for all failure cases in the system. 
Shi and Liu \cite{Shi_Liu_2015}, have given a decentralized load shedding mechanism to prevent cascading failures in a smart electric grid. 
In their solution, the nodes (buses) communicate with each other and determine the amount of load to be shed. The nodes are compensated in real-time to cover the shedding costs. Implementing their mechanism requires an advanced information and communication technologies infrastructure. However, developing countries require a simple solution.

\paragraph{Partial allocation of demand:}Buerman et. al \cite{Buermann_Gerding_Rastegari_2020} has studied fair allocation with stochastic resource availability. In their work, an household derives positive utility with the partial fulfillment of their demand, whereas in our case, 
households derive positive utility when the requested demand is fulfilled; otherwise, their utility is zero. 
Janjua et al \cite{Janjua_Ali_Kallu_Ibrahim_Zafar_Kim_2021},  
compared game-theoretic bankruptcy and Nash bargaining based solutions 
at the province level. Their solutions allocate a fraction of the demand that may prohibit households to carry out critical work.
In another direction, the researchers in \cite{Akasiadis_Chalkiadakis_2017,Ali_Mansoor_Khan_Arshad_Faizullah_Khan_2021} have provided solutions to the load shedding problem in which households can shift their demand to some other time of the day. In practice, there are many activities that cannot be shifted to some other time of the day, and doing so will impact the comfort of an agent. 
The Gerding et al. \cite{Gerding_Perez-Diaz_Aziz_Gaspers_Marcu_Mattei_Walsh_2019} deals with allocating limited electricity to EVs in an online setting, where not all EVs may receive their full requested charge.
However, in our model, the fractional fulfillment of an agenF's demand is of no use.
Stein et al. \cite{Stein_Gerding_Robu_Jennings_2012}, presented a model-based online mechanism for pure electric vehicles. In their setting, an agent values a particular amount of resource, and do not derive additional value for receive more and values it to $0$ if receives less.  In contrast, our model allows agents to derive utilities from receiving varying amounts of the resource.
At a more granular level, Azasoo et al. \cite{Azasoo_Kanakis_Al-Sherbaz_Agyeman_2019} have presented a solution that meets a fraction of the agent’s demand, which is further distributed among the high priority tasks using the smart-meter (which are less prevalent in developing countries). If all tasks are equally important, then their solution will prevent agents to perform some essential activities.

\paragraph{Household level load shedding:}The above work and other current approaches are focused on load shedding at the bus level and are often unfair at the household level. Household level load shedding solutions has been discussed in \cite{Oluwasuji_Malik_Zhang_Ramchurn_2018} in which the authors have presented four heuristic algorithms. Their heuristic algorithms failed to consider the heterogeneous demands of the households and therefore resulted in disconnecting households from supply when they needed it the most. Later, they improved upon their result in \cite{Oluwasuji_Malik_Zhang_Ramchurn_2020} in which they modeled the household load shedding problem as a multiple knapsack problem (MKP) and solved it using Integer Linear Programming (ILP). 

Recently, Baghel, Ravsky and Segal-Halevi \cite{Baghel_Ravsky_Segal-Halevi_2024,Baghel_Ravsky_Segal-Halevi_2025} presented bin-packing based solutions for the fair allocation of electricity. They focused on the egalitarian allocation of connection time. They presented both theoretic approximation algorithms, and practical heuristic algorithms.

We improve over their work in two ways: first, we find an (approximate) leximin electricity allocation, which is stronger than egalitarian; second, we handle connectivity constraints assuming the distribution network has a tree structure.

Other relevant work has been incorporated into the section's discussion.

\section{Model}\label{sec:model} 

An instance of the electricity division problem is defined as follows.
The time during which electricity is available is represented by the interval $[0,1]$, which may correspond to any relevant period—such as a day, an hour, or a year (we normalize to the unit interval for simplicity).
At each point in time, there is a fixed supply of electricity, denoted by $S \in \mathbb{R}_+$ (measured in kilowatts), which is assumed to remain constant throughout the entire interval.

The set of agents is denoted by $N:=\{1,\ldots,n\}$, where each agent $i \in N$ represents a household or a family. 
Each agent $i \in N$ has a fixed demand $d_i \in \mathbb{R}_{> 0}$, representing the exact amount of electricity required to connect their household. Receiving less than $d_i$ is equivalent to receiving nothing, and receiving more provides no additional benefit.
We assume that the demands are determined externally rather than reported by the agents. As a result, there is no strategic aspect related to truth-telling or misreporting of demand.

For a subset $C\subseteq N$, we denote the total demand of agents in $C$ by $d_C := \sum_{i\in C} d_i$.
We assume that the total demand is higher than the supply, that is, $d_N > S$.

The network through which electricity is transmitted is modeled by a graph $G = (V, E)$, where $V := N \cup \{s\}$ includes one vertex for each agent and a special vertex $s$ representing the power station; and there is an edge between two vertices if there is a power line connecting them. 

\paragraph{Configurations.} We know that it is impossible to connect all agents to electricity at the same time, but it is possible for some subsets of the agents.
To capture this, we refer to a subset of agents, $C \subseteq N$,  as a \emph{configuration}, and say that the configuration $C$ is \emph{feasible} if it satisfies the following two requirements:
\begin{itemize}
    \item Demand Constraint: the total demand of the agents in $C$ does not exceed the available supply: $d_C \leq S$. 
    
    \item Geographic Constraint: the subgraph induced by the agents in $C$ together with the power station is connected. 
    That is, let $V' := C \cup \{s\}$ and $E'$ be the set of edges in $G$ connecting nodes in $V'$, then the subgraph $G' := (V', E')$ must be connected.
\end{itemize}
We denote the set of all feasible configurations, given the demand vector $\mathbf{d}$ and the supply $S$ and the graph $G$, by $\mathcal{C}(\mathbf{d},S,G)$, or simply $\mathcal{C}$ when the arguments are clear from the context.

\paragraph{Solutions.} A \emph{solution} to the electricity division problem is an allocation $\mathbf{A} = (A_1, \ldots, A_n)$, where each $A_i$ is a union of finitely many disjoint sub-intervals of $[0,1]$, representing the times during which agent $i$ is connected to the power supply. 
Notice that the $A_i$ need not be disjoint, as it is possible to connect several agents at the same time, as long as their total demand is at most $S$.%
\footnote{
This is in contrast to the classic problem known as \emph{cake-cutting}, in which an interval is partitioned among $n$ agents such that the agents' pieces must be pairwise-disjoint.
}
A solution is \emph{feasible} if, for every interval $I \subseteq [0,1]$, the set of agents who hold this piece, $\{i \in N \mid I \subseteq A_i\}$, is a feasible configuration.
The set of all feasible solution is denoted by $\mathcal{A}$.

\paragraph{Utilities.} 
Each agent $i$ has a utility function, $u_i \colon \mathcal{A} \to \mathbb{R}_{\geq 0}$, that describes the satisfaction level from a given allocation. 
In this paper, we assume that the agents' utilities are \emph{uniform} and \emph{additive} --- the agents care only about the total duration of connection. That is, $u_i(\mathbf{A}) = \sum_{I \in A_i} |I|$ for any $i \in N$ (where $A_i$ is the union of disjoint sub-intervals given to $i$).

Although this assumption may seem overly restrictive and not fully reflective of real-world conditions, we prove in the next section that the problem is NP-hard even under such utilities.
Further, this assumption is becoming more reasonable in many developing regions, where home-batteries are gradually becoming more common. These allow households to store electricity and use it throughout the allocation period (e.g., a full day), making the total duration of connection more important than its specific timing.

\paragraph{Example.}
Assume there are three agents $N:= \{1,2,3\}$, each with demand of $2$ ($d_i =2$ for any $i \in N$). The supply is $S=4$. The graph $G$ contains only two paths: $s-1-2$ and $s-3$.
Notice that indeed $\sum_i d_i > S$.

The feasible configurations are $\{1\}, \{1,2\}, \{3\}$ and $\{1,3\}$.
The configuration $\{1,2,3\}$ is not feasible since it does not satisfy the demand constraint.
The configurations $\{2\}$ and $\{2,3\}$ are not feasible since they do not satisfy the Geographic constraint. 

Consider the following allocation: $A_1:= \{[0,1]\}$, $A_2:= \{[0,1/2]\}$ and $A_3:= \{[0,1/2]\}$.
The allocation is feasible since each set of agents who hold the same interval is a feasible configuration. Specifically, $\{1,2\}$ hold the interval $[0,1/2]$ and $\{1,3\}$ hold the interval $[1/2,1]$.
The utilities of the agents from this allocation are: $u_1(\mathbf{A}) = 1$, $u_2(\mathbf{A}) = 1/2$ and $u_3(\mathbf{A}) = 1/2$.

In contrast, consider the allocation: $A_1:= \{[0,1/3], [2/3,1]\}$, $A_2:= \{[0,2/3]\}$ and $A_3:= [1/3,1]$.
This allocation is \emph{not} feasible since the set of agents $\{2,3\}$ hold the interval $[1/3,2/3]$, but $\{2,3\}$ is not a feasible configuration.

\subsection{Fairness Measurements}
In this paper, we focus on egalitarianism---a principle that aims to attain fairness by improving the welfare of the worst-off agents.  
Specifically, we consider \emph{egalitarian optimization} and its refinement, \emph{leximin optimization}.

\paragraph{Egalitarian Optimal.} An egalitarian optimal allocation, $\mathbf{A}^{eg}$, is one that maximizes the smallest utility:
\begin{align*}
    \min_{i \in N} u_i(\mathbf{A}^{eg}) \geq \min_{i \in N} u_i(\mathbf{A}) \quad \text{ for any } \mathbf{A} \in \mathcal{A}
\end{align*}

\paragraph{Leximin Optimal.} A leximin optimal allocation is one that one that maximizes the smallest utility, and, subject to that, maximizes the second-smallest utility, and then the third, and so on.
To define it formally, we start by describing the leximin order over vectors of size $n$.

Let $\mathbf{x}\in \mathbb{R}^n$. We denote by $\mathbf{x}^{\uparrow}$ the corresponding vector sorted in non-decreasing order, and by $x^{\uparrow}_i$ the $i$-th smallest element counting repetitions.
For two vectors, $\mathbf{x}, \mathbf{y}\in \mathbb{R}^n$, 
$\mathbf{x}$ is said to be leximin-preferred over $\mathbf{y}$, denoted $\mathbf{x} \succeq \mathbf{y}$, if one of the following holds. Either $\mathbf{x}^{\uparrow} = \mathbf{y}^{\uparrow}$; or that there exists an integer $1 \leq k \leq n$ such that $x^{\uparrow}_k > y^{\uparrow}_k$ and $x^{\uparrow}_j = y^{\uparrow}_j$ for any $1 \leq j < k$. 

For a given allocation, $\mathbf{A}$, let $\mathbf{u(A)}:=(u_1(\mathbf{A}), \ldots, u_n(\mathbf{A}))$ be the vector of utilities attained by all agents. Notice that it is a vector of size $n$.
We denote by $\mathbf{u}^{\uparrow}(\mathbf{A})$ the corresponding vector sorted in non-decreasing order, and by $u^{\uparrow}_i(\mathbf{A})$ the $i$-th smallest element in the vector $\mathbf{u}^{\uparrow}(\mathbf{A})$.

A leximin optimal allocation, $\mathbf{A}^{lex}$, is one for which $\mathbf{u}(\mathbf{A}^{lex}) \succeq \mathbf{u}(\mathbf{A})$ for any $\mathbf{A} \in \mathcal{A}$.

\paragraph{Leximin Approximation.} We follow the definition of Hartman et al. \cite{hartman2025reducing}. For a given approximation factor $\alpha \in [0,1]$, an allocation is $\alpha$-leximin-approximation, $\mathbf{A}^{\alpha Lex}$, if $\mathbf{u}(\mathbf{A}^{\alpha Lex}) \succeq \alpha \cdot \mathbf{u}(\mathbf{A})$ for any $\mathbf{A} \in \mathcal{A}$.

\paragraph{FTPAS for Leximin.} An FTPAS for leximin is an algorithm that, given any constant $\epsilon\in (0,1)$, returns an allocation that is $(1-\epsilon)$-leximin-approximation, and runs in time polynomial in $1/\epsilon$ and the problem size.

\section{Computing an Egalitarian Optimal Allocation is NP-Hard}
\label{sec:hardness}

In this section, we prove that computing an egalitarian-optimal allocation of electricity is NP-hard even when there are no Geographic constraints: all agents are directly connected to the power station, meaning that the graph $G$ is a star centered at the vertex $s$. 
In this setting, any group of agents that satisfies the demand constraint can be connected at the same time.

    

Specifically, we show that determining whether the egalitarian welfare is at least $1/2$ is NP-hard, which implies that computing an egalitarian-optimal allocation is also NP-hard.

{Now, we will give proof of the problem $\mathrm{MaxEED}$: Maximize Egalitarian Electricity Distribution Problem defined as ``find the largest $r$ such that each agent can be connected at least $r$ of the time''.
}
    \begin{theorem} \label{thm:finding r is NP hard}
	$\mathrm{MaxEED}$ is NP-hard. 
    In particular, it is NP-hard to decide if it is possible to connect each household at least $r=1/2$ of the time.
\end{theorem}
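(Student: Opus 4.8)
The plan is to prove NP-hardness by a reduction from the \textsc{Partition} problem: given positive integers $a_1, \ldots, a_m$ with $\sum_{j=1}^m a_j = 2T$, decide whether some subset sums to exactly $T$. Since the theorem concerns the star graph, the Geographic constraint is vacuous, so a configuration is feasible precisely when its total demand is at most the supply; this lets me reason purely about subset sums.

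Given a \textsc{Partition} instance, I would build an electricity instance with $m$ agents, demand $d_j := a_j$ for each agent $j$, supply $S := T$, and the star graph $G$ centered at $s$. Note that $d_N = 2T > T = S$, which is consistent with the model assumption $d_N > S$. The claim to establish is that the egalitarian welfare of this instance is at least $1/2$ if and only if the \textsc{Partition} instance is a yes-instance; this immediately yields the theorem, and in particular the hardness of the $r = 1/2$ decision.

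The easy direction is ``\textsc{Partition} $\Rightarrow$ welfare $\geq 1/2$'': if $A \uplus B = N$ with $d_A = d_B = T$, connect the agents of $A$ during $[0, 1/2]$ and the agents of $B$ during $[1/2, 1]$. Both $A$ and $B$ are feasible configurations, since their demand equals $S$, and every agent is connected for exactly half of the interval, so $\min_{i} u_i(\mathbf{A}) = 1/2$.

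The main obstacle is the converse, which I would prove by a demand-time accounting argument. For any feasible solution $\mathbf{A}$, let $D(t)$ denote the total demand of the agents connected at time $t$; feasibility forces $D(t) \leq S = T$ for almost every $t$, and integrating gives $\int_0^1 D(t)\,dt = \sum_j d_j\, u_j(\mathbf{A})$. If $\min_{i} u_i(\mathbf{A}) \geq 1/2$, then this sum is at least $\tfrac{1}{2}\sum_j a_j = T$, while the feasibility bound gives $\int_0^1 D(t)\,dt \leq T$; hence equality holds throughout, forcing $u_j(\mathbf{A}) = 1/2$ for every $j$ and $D(t) = T$ for almost every $t$. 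Now, if no subset of the $a_j$ summed to exactly $T$, then every feasible configuration would have demand strictly below $T$, making $D(t) < T$ everywhere and contradicting $\int_0^1 D(t)\,dt = T$. Therefore some configuration of demand exactly $T$ must be used, and it yields the desired \textsc{Partition}. The only points requiring care are that the connected set is well-defined at all but finitely many times (each $A_i$ is a finite union of intervals) and the standard NP-hardness of \textsc{Partition}; the counting step linking $\min_{i} u_i(\mathbf{A}) \geq 1/2$ to full utilization of the supply at (almost) every moment is the crux of the argument.
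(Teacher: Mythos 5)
Your proposal is correct and follows essentially the same route as the paper: a reduction from \textsc{Partition} with supply $S$ equal to half the total demand, the same two-interval construction for the forward direction, and the same supply-accounting argument (total delivered electricity is at least $d_N/2 = S$ by the utility bound and at most $S$ by feasibility, forcing every configuration used to be exactly full) for the converse. Your integral formulation via $D(t)$ is just a continuous rephrasing of the paper's discrete sum over configurations $C_1,\ldots,C_q$ with connection times $t_j$.
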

\begin{proof}
	{We prove this by reduction from the \textsc{Partition} problem as follows:}
	
	Let $\mathbf{d} = (d_1, \ldots, d_n)$ be an instance of the partition problem. We can assume that $d_N$
    is even otherwise, trivially the answer is ``no''. 
    
    We construct an instance of MaxEED with supply $S = d_N/2$, and demand vector $\mathbf{d}$.

    Note that a solution to MaxEED in which each agent is connected exactly $r=1/2$ of the time always results in an allocation of supply equal to $S$. We will use this fact in our proof.

    Now we will prove that the partition problem is a ``Yes'' instance iff there is a solution to the MaxEED in which is each agent is connected at least $r=1/2$ of the time.

    \begin{itemize}
        \item Suppose there is a solution to the \textsc{Partition} instance. Denote it by $(C_1,N\setminus C_1)$. then connecting each of these two subsets for half the time will result in a feasible solution to MaxEED in which each household is connected exactly $r=1/2$ of the time.
        \item Let $C_1, \ldots C_q$ be subsets of $N$ in a solution to MaxEED. Let us assume that each subset $C_j$ is connected $t_j > 0$ of the time such that $\sum t_j = 1$ and each item is connected at least $r=1/2$ of the time.  
        Then, for each household $i$, $\sum_{C_j\ni i} t_i \geq 1/2$, so the total amount of electricity given to $i$ is at least $d_i/2$.
        Therefore, the total amount of electricity given to all households together is at least $d_N/2$, which equals $S$ by construction. But since the total available supply is $S$, the inequality must in fact be an equality, which means that each configuration must be completely full, that is,  $d_{C_j} = S$ for every configuration $C_j$. Hence $d_{N\setminus C_j}= S$ too, so $(C_i,N\setminus C_i)$ is a solution to the \textsc{Partition} instance.
    \qed
    \end{itemize}
\end{proof}

\section{An FPTAS for Leximin in Tree Electricity Networks}
\label{sec:fptas}
In this section we present our main result: a Fully Polynomial-Time Approximation Scheme (FPTAS) for leximin in the special case where the network connectivity graph is a tree. 

We start by noting that since agents care only about their total connection time, it is sufficient to specify how much time each feasible configuration is connected, rather than the exact scheduling of these intervals.
For example, allocating $[0,0.7]$ to agent $1$ and $[0.4,1]$ to agent $2$ yields the same utilities as any other allocation that also connects configuration $\{1\}$ for $0.4$ of the time, configuration $\{1,2\}$ for $0.3$ of the time, and configuration $\{2\}$ for $0.3$ of the time.
In other words, an allocation yields a given utility vector if and only if there exists a vector of connection times over feasible configurations that produces the same utilities.
Because of this equivalence, it is sufficient to consider only these vectors when searching for an optimal or approximately optimal solution.

We denote the set of all possible connection time vectors over feasible configurations by
\begin{align*}
    X := \{\mathbf{x} \in [0,1]^{|\mathcal{C}|} ~\mid ~\sum_{j=1}^{|\mathcal{C}|} x_j =1\}
\end{align*}
The utility of agent $i$ from a given configuration $C$ is $u_i(C):= \mathbf{1}_{i \in C_j}$ where $\mathbf{1}_{i \in C_j}$ equals $1$ if $i \in C_j$ and $0$ otherwise; while the utility from $\mathbf{x} \in X$ becomes $u_i(\mathbf{x})= \sum_{j=1}^{|\mathcal{C}|} x_j\cdot u_i(C_j)$. 
Notice that the number of feasible configurations may be exponential in $n$, so even efficiently describing such a solution is not straightforward. 
However, our algorithm always returns a \emph{sparse} solution—one with only a polynomial number of nonzero entries.
Such vectors can be efficiently represented by a list of the elements with positive values alongside their values.

Our solution utilizes a technique recently introduced in \cite{hartman2025reducing}. They show a generic reduction from leximin optimization to \emph{weighted-utilitarian optimization}, defined as follows (where $\alpha\in(0,1]$ is an approximation parameter)

\begin{tcolorbox}[left=2pt, right=2pt,  
%
%
colback=gray!5,                  
  colframe=gray!60,        
  colbacktitle=gray!20,           
  coltitle=black,                 
title=\textbf{(*) $\alpha$-Approximate Weighted-Utilitarian Allocation of Electricity}]
  \textbf{Input:} $n$ non-negative constants $v_1, \ldots, v_n$.
  \tcblower
  \textbf{Output:} A feasible configuration $C^{uo} \in \mathcal{C}$, for which:
  \begin{align*}
      \sum_{i =1}^n v_i\cdot u_i(C^{uo}) \geq \alpha \sum_{i =1}^n  v_i\cdot u_i(C) \quad \text{ for any } C \in \mathcal{C}
  \end{align*}
\end{tcolorbox}

Their main theorem is as follows.
\begin{theorem}[\cite{hartman2025reducing}]
\label{thm:hartman2025reducing}
    For any $\alpha\in(0,1]$, given a black box that solves 
    $\alpha$-Approximate Weighted-Utilitarian Allocation, 
    one can compute in polynomial time a lottery over configurations for which the vector of expected utilities is an $\alpha$-leximin approximation.
\end{theorem}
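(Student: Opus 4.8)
The plan is to reduce leximin optimization to a short sequence of linear programs over the polytope of achievable expected-utility vectors, using the weighted-utilitarian black box as an (approximate) linear-optimization oracle. First I would set up the geometry. The set of expected-utility vectors achievable by lotteries over feasible configurations is the polytope $U := \mathrm{conv}\{(u_1(C),\ldots,u_n(C)) : C \in \mathcal{C}\}$, whose vertices are configuration utility vectors. Every point of $U$ is realized by some $\mathbf{x}\in X$, and by Carathéodory's theorem by one with support at most $n+1$; this is exactly what guarantees that the final lottery is \emph{sparse} and polynomially representable. The crucial observation is that a weighted-utilitarian query with weights $\mathbf{v}\ge 0$ is precisely a request to (approximately) maximize the linear functional $\mathbf{v}\mapsto\sum_i v_i u_i(\cdot)$ over $U$, since the maximum of a linear functional over a polytope is attained at a vertex, i.e.\ at a single configuration.

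Next I would solve the egalitarian (max-min) step. Maximizing the smallest utility is the linear program $\max\{t : u_i(\mathbf{x})\ge t \ \forall i,\ \mathbf{x}\in X\}$, which has exponentially many variables $x_j$ but whose dual ranges over probability distributions $\mathbf{p}$ on the $n$ agents and has one constraint per configuration. The separation problem for this dual---deciding whether some configuration violates a constraint---is exactly a weighted-utilitarian query with weights $\mathbf{p}$. Hence the equivalence of separation and optimization (via the ellipsoid method) lets me solve the egalitarian LP in polynomial time, recovering both the optimal threshold and a sparse optimal lottery, even though $|\mathcal{C}|$ may be exponential.

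For full leximin I would iterate: after computing the optimal minimum value, identify the \emph{saturated} agents (those whose utility is forced to the threshold in every optimal solution), again via oracle-driven LPs; freeze their guaranteed utilities as lower-bound constraints; and repeat on the remaining agents, maximizing the next-smallest utility, for at most $n$ rounds. Each round is a linear program over $U$ augmented with lower-bound constraints, still solvable with the weighted-utilitarian oracle acting as the separation oracle. With only an $\alpha$-approximate oracle, each round yields a lottery achieving at least $\alpha$ times the true threshold, and I would record these achieved values as the frozen bounds.

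The main obstacle is the approximation analysis---ensuring the factor $\alpha$ does not compound to $\alpha^k$ across the $n$ rounds. I would exploit two facts about the leximin order: it is transitive, and scaling by a positive constant preserves it, because $(\alpha\mathbf{x})^{\uparrow}=\alpha\,\mathbf{x}^{\uparrow}$ and every coordinatewise comparison is invariant under multiplication by $\alpha>0$. Consequently it suffices to prove the single statement $\mathbf{u}(\mathbf{z})\succeq \alpha\cdot\mathbf{u}(\mathbf{A}^{lex})$ for the output lottery $\mathbf{z}$, since then $\mathbf{A}^{lex}\succeq\mathbf{A}$ for all feasible $\mathbf{A}$ gives $\alpha\cdot\mathbf{u}(\mathbf{A}^{lex})\succeq \alpha\cdot\mathbf{u}(\mathbf{A})$ and hence $\mathbf{u}(\mathbf{z})\succeq\alpha\cdot\mathbf{u}(\mathbf{A})$, which is the definition of an $\alpha$-leximin approximation. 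The delicate part is to argue, coordinate by coordinate, that the saturation structure produced by the approximate iterates aligns with that of the true leximin optimum, so that a single factor $\alpha$ (rather than one per level) is carried through all rounds, while simultaneously keeping every intermediate LP solvable in polynomial time despite the exponential number of configuration variables.
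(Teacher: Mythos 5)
A preliminary remark: the paper you are working from does not prove this theorem at all---it is quoted from \cite{hartman2025reducing} and used as a black box---so the only meaningful comparison is with the proof in that work. Much of your setup is sound and consistent with the framework there: the achievable expected-utility vectors form the polytope $\mathrm{conv}\{(u_1(C),\ldots,u_n(C)) : C\in\mathcal{C}\}$; a weighted-utilitarian query with nonnegative weights is exactly linear optimization over this polytope, hence a separation oracle for the duals of LPs with exponentially many configuration variables; Carath\'eodory gives sparse lotteries; and your reduction of the goal to the single comparison $\mathbf{u}(\mathbf{z}) \succeq \alpha\cdot\mathbf{u}(\mathbf{A}^{lex})$, via positive homogeneity and transitivity of the leximin order, is correct.

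The gap is that the step you yourself call ``the delicate part'' is the entire content of the theorem, and the mechanism you propose for it cannot work as stated. You iterate by identifying the \emph{saturated} agents---``those whose utility is forced to the threshold in every optimal solution''---and freezing their values as constraints. With only an $\alpha$-approximate oracle this set is not computable: the test LP for saturation is itself solved only approximately, so you cannot distinguish an agent forced to the threshold from one that could exceed it slightly. Misclassification in either direction is fatal: freezing an unsaturated agent over-constrains later rounds and can push a later sorted coordinate below $\alpha$ times its leximin value, while failing to freeze a saturated agent lets a later round trade that agent's utility away, again breaking the sorted coordinatewise comparison; either way the natural analysis loses a factor per round, which is exactly what the theorem forbids. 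This is why the reduction in \cite{hartman2025reducing} does not track agent identities at all: following the Ogryczak--\'Sliwi\'nski linearization, it carries constraints on the \emph{ordered values} themselves (``the sum of the $t$ smallest utilities is at least \ldots''), one LP per $t$, each solved in the dual with the oracle as an approximate separation oracle; the invariant that survives the approximate rounds is stated in terms of these ordered sums, not in terms of which agents attain them. Without that device (or an equivalent one) your induction has no invariant to carry across rounds, so the central claim---a single factor $\alpha$ rather than $\alpha^{n}$---remains unproven. A secondary, repairable issue: the separation-equals-optimization equivalence you invoke is for exact oracles; with an $\alpha$-approximate separation oracle the ellipsoid method only optimizes over an implicitly shrunken body, and this loss must be folded into the same invariant rather than cited as a black box.
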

While in their framework solutions are lotteries over deterministic outcomes, in our setting, the same structure applies--—the resulting probabilities can be naturally interpreted as connection times (fractions of the interval $[0,1]$).
This correspondence enables us to apply their reduction directly.

Below, we provide an FPTAS for weighted-utilitarian allocation of electricity (*).
By \Cref{thm:hartman2025reducing}, this implies an FPTAS for leximin. 

To begin, observe that for any feasible configuration $C$, its weighted utilitarian value, $\sum_{i =1}^n  v_i\cdot u_i(C)$, equals $\sum_{i \in C} v_i$. Without geographic constraints, problem (*) is equivalent to the  Knapsack problem: each of the $n$ agents corresponds to an item, the demand $d_i$ corresponds to the size of item $i$, the total supply $S$ corresponds to the Knapsack capacity, and $v_i$ is the value of item $i$. This classic problem admits a well-known FPTAS \cite{vazirani2001approximation}.
The challenge arises when we add geographic constraints. In the remainder of this section, we address this more complex variant. Specifically, we show that when the connectivity graph is a tree, the Knapsack FPTAS can be adapted to this case.

\subsection{Geographic Knapsack}
This section introduces the \emph{Geographic Knapsack} problem. This section is self-contained, with its own notation and definitions.
Given a list of values $v_1,\ldots,v_n$ and a subset $C\subseteq [n]$, we denote $v_C := \sum_{i\in C} v_i = $ the total value of $C$. Similarly, given a list of sizes $d_1,\ldots,d_n$ we denote $d_C := \sum_{i\in C} d_i = $ the total size of $C$.

\begin{tcolorbox}[left=2pt, right=2pt, 
colback=gray!5,                  
  colframe=gray!60,        
  colbacktitle=gray!20,           
  coltitle=black,                 
title=\textbf{Geographic Knapsack}]
  \textbf{Input:~~~~} 
\begin{minipage}{0.85\linewidth}
\begin{itemize}
    \item $n$ items, each with its own size $d_i \in \mathbb{R}_{\geq 0}$ and value  $v_i \in \mathbb{R}_{\geq 0}$.

    \item Knapsack capacity $S$.

    \item A graph $G:= (V,E)$ with $V:=\{1,\ldots,n\} \cup \{s\}$
\end{itemize}
\end{minipage}\\[0.15em]
  \tcblower
  \textbf{Output:~~} 
  \begin{minipage}{0.85\linewidth}
  A \emph{legal packing} $C$ with maximum total value $v_C$, where a legal packing is a subset of the items, $C \subseteq N$, for which the following hold:
  \begin{itemize}
      \item Capacity Constraint: the total size does not exceed the knapsack capacity: $d_C  \leq S$.

      \item Geographic Constraint: the subgraph $G':=(V',E')$ where $V':=C \cup\{s\}$ and $E'$ the set of edges induced by $V'$, is connected. 
  \end{itemize}
  \end{minipage}\\[0.15em]
  
\end{tcolorbox}

The importance of this problem is demonstrated in the following theorem:
\begin{theorem}
    Let $\alpha \in (0,1]$. Given an $\alpha$-approximate solver for the Geographic knapsack problem, one can obtain an $\alpha$-leximin approximation for the electricity division problem in polynomial time.
\end{theorem}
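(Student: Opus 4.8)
The plan is to observe that an $\alpha$-approximate solver for Geographic Knapsack is nothing but an $\alpha$-approximate solver for the Weighted-Utilitarian Allocation of Electricity problem (*), and then to feed it into \Cref{thm:hartman2025reducing} as the required black box. The argument is therefore a chain of two translations followed by one citation.

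First I would verify that the two problems describe the same feasible set and the same objective. A \emph{legal packing} for Geographic Knapsack is a subset $C \subseteq N$ with $d_C \leq S$ whose induced subgraph on $C \cup \{s\}$ is connected; these are exactly the two conditions defining a feasible configuration, so legal packings coincide with $\mathcal{C}$. For the objective, since $u_i(C) = \mathbf{1}_{i \in C}$ we have $\sum_{i=1}^n v_i \cdot u_i(C) = \sum_{i \in C} v_i = v_C$, so maximizing the weighted-utilitarian value is identical to maximizing the packing's total value. Consequently, running the $\alpha$-approximate Geographic Knapsack solver on the input $v_1, \ldots, v_n$ returns a configuration $C^{uo}$ with $v_{C^{uo}} \geq \alpha \cdot v_C$ for every $C \in \mathcal{C}$, which is precisely the guarantee demanded by (*).

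Finally I would invoke \Cref{thm:hartman2025reducing} with this solver in the role of the black box; it produces in polynomial time a lottery over configurations whose expected-utility vector is an $\alpha$-leximin approximation. The remaining step is to reinterpret the output as an electricity allocation: assigning each configuration in the support a connection time equal to its lottery probability gives a point of $X$, and since every configuration in the support is feasible, the induced schedule is feasible in the sense of \Cref{sec:model}, with identical utilities. The only place needing care—and the closest thing to an obstacle—is this reinterpretation, i.e. checking that the probability-to-time correspondence preserves both feasibility and the utility vector; everything else is a direct matching of definitions combined with the already-established polynomial-time reduction.
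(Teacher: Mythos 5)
Your proposal is correct and follows essentially the same route as the paper's own proof: identify legal packings with feasible configurations, observe that $v_C$ equals the weighted utilitarian objective, and plug the solver into \Cref{thm:hartman2025reducing}. The lottery-to-connection-time reinterpretation you flag as the delicate step is handled in the paper's discussion immediately surrounding \Cref{thm:hartman2025reducing} rather than inside the proof itself, but the content is the same.
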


\begin{proof}
    By \Cref{thm:hartman2025reducing}, in order to compute an $\alpha$-leximin-approximation for the electricity division problem, it is sufficient to provide an $\alpha$-approximate solver for the weighted utilitarian electricity allocation problem (*). 

    Given an input to the weighted utilitarian problem, we use the solver for the Geographic Knapsack problem as follows: each of the $n$ agents corresponds to an item, the demands, $d_1, \ldots, d_n$, correspond to the sizes of the items, the total supply $S $ corresponds to the knapsack capacity, and the given constants $v_1, \ldots, v_n$ correspond to the values of the items.

    Any legal packing in the geographic knapsack problem
    corresponds to a feasible configuration in the electricity problem.
    The total value of the packing equals the total weighted utility of that configuration.
    In particular, a legal packing is optimal if and only if its corresponding (legal) configuration is optimal, and it is an $\alpha$-approximation if and only if the configuration is an $\alpha$-approximation.
    \qed
\end{proof}


Before providing an FPTAS for Geographic Knapsack, we note that recently, and simultaneously to our work, \cite{dey2024knapsack} have developed an FPTAS for a different variant of Knapsack problem with connectivity constraints over a general graph. 
Their results are not directly applicable to our setting, as in our setting there is a predefined ``source'' node (representing the power station), and each selected subset must also be connected to this node. However, if their algorithm could be adapted to this setting, then it could be plugged into the reduction, and yield an FPTAS for leximin-optimal electricity distribution for any graph. 
We discuss this direction in the Future Work section.






Similarly to the known FPTAS for Knapsack \cite{vazirani2001approximation},\footnote{We recommend \href{https://www.youtube.com/watch?v=D372XEIP4-8&ab_channel=ComputerScienceTheoryExplained}{this video} by 
Computer Science Theory Explained for a clear explanation.} we propose an algorithm for the case where item values are integers. We apply the same value-rounding technique used in the standard algorithm, ensuring that an optimal packing for the rounded instance yields a $(1 - \varepsilon)$-approximation with respect to the original (unrounded) values.
For completeness, we present the rounding procedure and establish its correctness, although this technique is standard and not part of our contribution.

Let $\theta:=(\epsilon\cdot v_{max})/n$ where $v_{max} := \max_i v_i$. We round the values $v_i$ as follows: $r_i := \lfloor v_i / \theta\rfloor$.

\begin{lemma}
    \label{approx solution lemma}
    Let $C^{r} \subseteq N$ be a legal packing with maximum value w.r.t. the rounded values $r_i$. ~Let $C^*$ be a legal packing with maximum value w.r.t. the original values. 
    ~ Then:
    \begin{align*}
        \sum_{i \in C^r} v_i \geq (1-\epsilon) \sum_{i \in C^*} v_i.
    \end{align*}
\end{lemma}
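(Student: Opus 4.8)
The plan is to run the textbook Knapsack value-rounding argument verbatim, with one extra point of care: the only place the \emph{Geographic} constraint interferes is in the final lower bound on the optimum, where in ordinary Knapsack one invokes the single best item. I would assume throughout that $v_{max}>0$ (otherwise all values vanish and the inequality is $0\ge 0$), so that $\theta>0$.

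First I would record the two-sided estimate produced by the floor: since $r_i=\lfloor v_i/\theta\rfloor$, we have $\theta r_i \le v_i < \theta r_i+\theta$, equivalently $v_i-\theta < \theta r_i \le v_i$ for every $i$. The left inequality is what loses value in the rounding; the right inequality is what lets us convert rounded mass back into real value. Next I would use the defining optimality of $C^r$: because $C^r$ maximizes the \emph{rounded} total value over all legal packings and $C^*$ is itself a legal packing, $\sum_{i\in C^r} r_i \ge \sum_{i\in C^*} r_i$. This comparison is legitimate precisely because $C^r$ and $C^*$ range over the same family of legal packings (same capacity and same geographic constraint); only the objective differs. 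Chaining these gives
\begin{align*}
    \sum_{i\in C^r} v_i \;\ge\; \theta\sum_{i\in C^r} r_i \;\ge\; \theta\sum_{i\in C^*} r_i \;>\; \sum_{i\in C^*}(v_i-\theta) \;=\; \sum_{i\in C^*} v_i \;-\; |C^*|\,\theta \;\ge\; \sum_{i\in C^*} v_i \;-\; n\theta,
\end{align*}
and substituting $\theta=(\epsilon\, v_{max})/n$ turns the additive loss into exactly $\epsilon\, v_{max}$.

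The one step that is not automatic in the geographic setting, and which I expect to be the crux, is the lower bound $\sum_{i\in C^*} v_i \ge v_{max}$ needed to absorb the error term and conclude $\sum_{i\in C^r} v_i \ge (1-\epsilon)\sum_{i\in C^*} v_i$. In plain Knapsack this is immediate: the singleton consisting of the highest-value item is always a feasible packing (after discarding oversized items), so the optimum is at least $v_{max}$. Here a singleton need not be a \emph{legal} packing, since the item attaining $v_{max}$ may not be directly adjacent to the source $s$. I would resolve this with a preprocessing remark specific to the tree: item $i$ can belong to \emph{some} legal packing if and only if the unique $s$--$i$ path in the tree has total demand at most $S$, and the minimal such packing is exactly that path. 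Removing every item failing this test changes neither the set of legal packings nor the optimum, and after removal the top-valued surviving item lies in a legal packing of value at least $v_{max}$, so $\sum_{i\in C^*} v_i \ge v_{max}$. With this in hand the final line is $\sum_{i\in C^r} v_i \ge \sum_{i\in C^*} v_i - \epsilon\, v_{max} \ge (1-\epsilon)\sum_{i\in C^*} v_i$, completing the proof.
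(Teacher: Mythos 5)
Your proposal is correct, and the core of it is exactly the paper's argument: the chain
$\sum_{i\in C^r} v_i \ge \theta\sum_{i\in C^r} r_i \ge \theta\sum_{i\in C^*} r_i \ge \sum_{i\in C^*} v_i - n\theta = \sum_{i\in C^*} v_i - \epsilon v_{max}$,
followed by absorbing the additive loss via $v_{max} \le \sum_{i\in C^*} v_i$. The difference lies in that final step. The paper simply annotates it ``as $v_{max} \leq \sum_{i \in C^*} v_i$'' with no justification, whereas you correctly identify it as the one place where the geographic constraint actually matters, and you supply the missing argument: in a tree, item $i$ belongs to \emph{some} legal packing if and only if the unique $s$--$i$ path has total demand at most $S$; discarding items that fail this test changes neither the family of legal packings nor $C^*$, and afterwards the surviving top-value item lies in a legal packing, so the optimum is at least $v_{max}$.

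This extra care is not pedantry --- without the preprocessing (or some equivalent assumption), the lemma as stated can fail. If the item attaining $v_{max}$ is unpackable (because its own demand, or the demand of its path to $s$, exceeds $S$), then $\theta = \epsilon v_{max}/n$ is inflated by a value that no legal packing can realize. Taking one unpackable item of huge value and one packable item of value $1$, every packable item rounds down to $r_i = 0$, so the empty set is among the packings of maximum rounded value; choosing $C^r = \emptyset$ gives $\sum_{i\in C^r} v_i = 0$ while $\sum_{i\in C^*} v_i = 1$, violating the claimed inequality for every $\epsilon < 1$. So your version follows the same route as the paper but is strictly more complete: the paper implicitly assumes that the maximum-value item can be packed, and your path-demand test is precisely the (tree-specific) statement needed to make that assumption legitimate.
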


\begin{proof}

    \begin{align*}
        \sum_{i \in C^r} v_i &= \sum_{i \in C^r} \theta \frac{ v_i}{\theta} \geq \sum_{i \in C^r} \theta \lfloor\frac{ v_i}{\theta} \rfloor = \sum_{i \in C^r} \theta \cdot r_i && \text{(By definition of $r_i$)}\\
        &  = \theta \sum_{i \in C^r}  r_i \geq \theta \sum_{i \in C^*}  r_i && \text{(as $C^r$ is optimal w.r.t. $(r_i)_i$)}\\
        &  = \theta \sum_{i \in C^*} \lfloor\frac{ v_i}{\theta} \rfloor  \geq \theta \sum_{i \in C^*}  \left(\frac{ v_i}{\theta} -1\right)  && \text{($\lfloor \cdot \rfloor$ reduces values by at most $1$)}\\
        & = \sum_{i \in C^*} v_i - \sum_{i \in C^*} \theta = \sum_{i \in C^*} v_i - n \frac{\epsilon\cdot v_{max}}{n} && \text{(By definition of $\theta$)}\\
        & = \sum_{i \in C^*} v_i - \epsilon\cdot v_{max} 
        \\
        & \geq \sum_{i \in C^*} v_i - \epsilon \sum_{i \in C^*} v_i && \text{(as $v_{max} \leq \sum_{i \in C^*} v_i$)}\\
        & = (1-\epsilon)  \sum_{i \in C^*} v_i.
    \end{align*}
\end{proof}
\Cref{approx solution lemma} implies that, to get an FPTAS for Geographic Knapsack, it is sufficient to find an exact algorithm for the special case in which all values are integers.
Our solution is given in \Cref{alg:SolveGeographicKnapsack}. It computes a legal packing of maximum value in the special case where the graph $G$ is a tree and all item values are integers.
The rest of this section is devoted to explaining this algorithm.

In our tree network, the root is labeled $s$. 
we renumber the nodes such that the following conditions hold:
\begin{enumerate}
    \item Nodes further from the root are given smaller numbers, that is, if $distance(s,i) > distance(s,i')$ then $i < i'$.
    \item Sibling nodes (nodes with the same parent) are given consecutive numbers.
\end{enumerate}
This renumbering can be done in time $O(|V|+|E|)$.

We denote by $c(i)$ the highest-numbered child of node $i$, and by $p(i)$ be the parent of node $i$.
Note that, by condition 2, if $p(i)\neq p(i-1)$, it means that $i$ is the lowest-numbered child of its parent; in that case we say that $i$ is a ``first child''. 
See Figure \ref{fig:tree} for an example.

\begin{figure}[h]
    \centering
\includegraphics[width=0.6\textwidth]{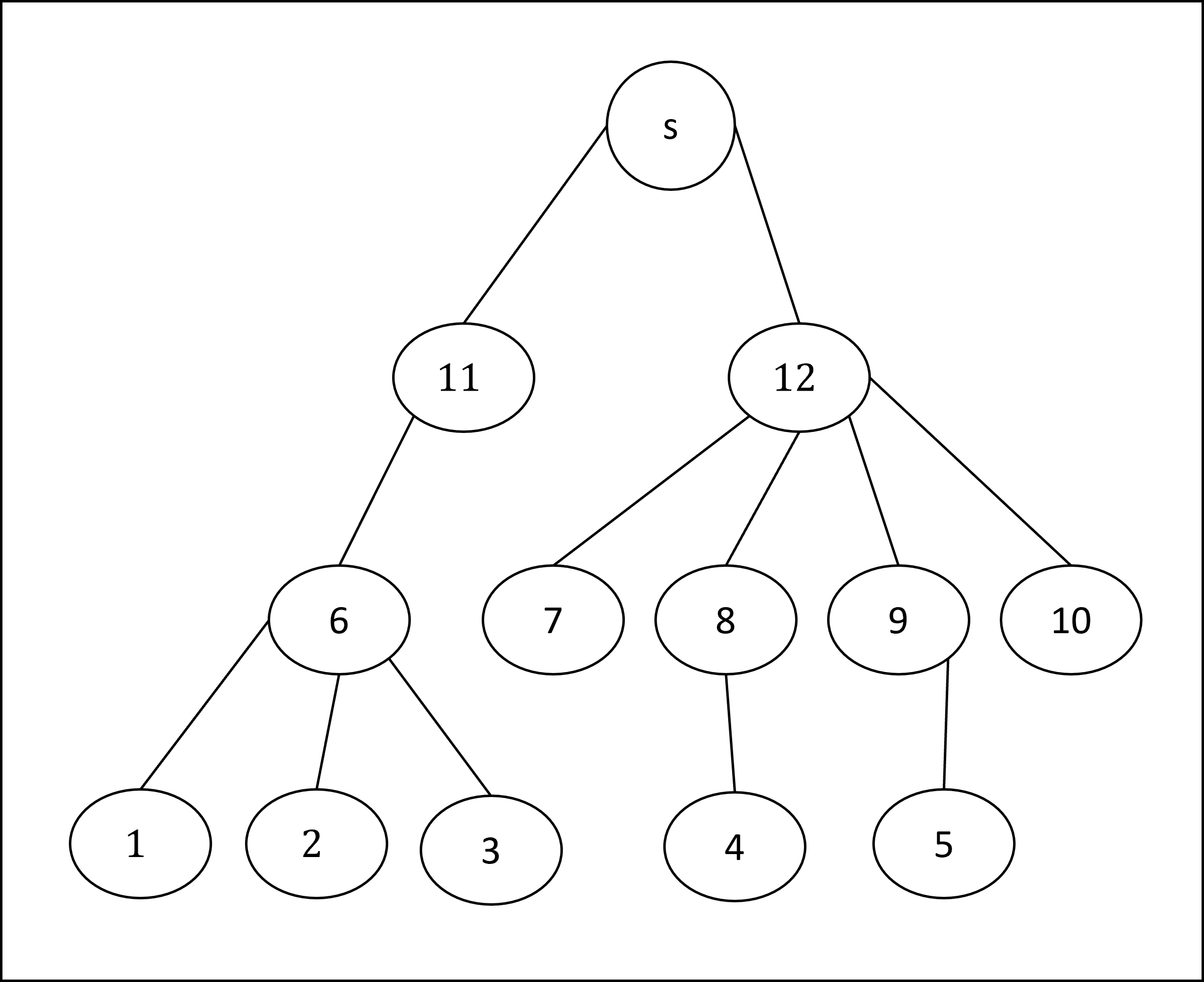}
\caption{An example graph after renumbering of the nodes. 
Note that the nodes farthest from the root are given the smallest numbers (1, 2, 3, 4, 5).
Also, all siblings (children of the same parent node) have consecutive numbers, e.g. the siblings $7,8,9,10$ of parent node $12$ are given consecutive numbers. 
Here, each of the nodes 1, 4, 5, 6, 7, 11 is a ``first child''.
Moreover, we have $c(6)=3, c(8)=4, c(9)=5, c(11)=6, c(12)=7$.
}
\label{fig:tree}
\end{figure}

Let $v_N := \sum_{i \in N} v_i$. 
The algorithm constructs a table $F$ of size $n\times (v_N+1)$, where $F(i,j)$ represents a subset $C\subseteq [i]$ such that $v_C = j$ and the following condition holds:
\begin{align}
    \label{eq:connected}
    C\cup \{\parentOf{i}\} \text{~induces a connected subgraph of $G$.}
\end{align}
Among all subsets satisfying these conditions, $F(i,j)$ contains a subset $C$ for which the total size $d_C$ is minimum. Formally:
\begin{align*}
    F(i,j) := \arg \min_{\{C\subseteq [i]:~
    v_C=j,~C\cup \{\parentOf{i}\} \text{~is~connected.}
    \}}
    d_C
\end{align*}
Informally, $F(i,j)$ is the most ``economic'' (= with smallest total size) subset of the agents $1,\ldots,i$, among the subsets with total value exactly $j$, that are connected through the parent of $i$.

Initially, all cells of the table $F$ are empty. The main part of \Cref{alg:SolveGeographicKnapsack} is devoted to filling the cells of $F$ in order.

\begin{algorithm}
    \caption{An Algorithm for Geographic Knapsack with Integer Values in Tree Graphs}
    \label{alg:SolveGeographicKnapsack}
\KwIn{Sizes $d_1,\ldots,d_n \in \mathbb{R}_{>0}$, integer values $v_1,\ldots,v_n \in \mathbb{N}$, and a tree graph $G$.}
    \SetAlgoLined
    \DontPrintSemicolon
    \LinesNumbered
    \SetKw{KwTo}{to}

     Initialize 
     $F(i,j) := None$; 
     for
     all $i\in N$ and all $j>0$.\;

    \Comment{/* Induction base: */}
    
    \For{$i=1$ \KwTo $n$}{
    
     $F(i,0) := \{\}$; 
     \;
     \Comment{/* The empty packing has value 0. */}

    \If{($i$ is a leaf node and $i$ is a first child)}
    {
    $F(i,v_i) = \{i\}$;
    \;
    \Comment{/* The only  packing of $[i]$ that satisfies \eqref{eq:connected} is $\{i\}$. */}
    }
    }

    \Comment{/* Induction step: */}
    
    \For{$i=1$ \KwTo $n$}{
        \For{$j=1$ \KwTo $v_N$}{
            \Comment{/*** Compute $C_{-i} := $ a  minimum-size legal packing without $i$. ***/}
            
            \Comment{/* Condition \eqref{eq:connected} implies that it must use some previous sibling of $i$. */}
            
            \eIf{$i$ is not a first child}
                {
                $C_{-i} := F(i-1,j);$ 
                \; 
                }
            {
                $C_{-i} := None$; 
                \; 
            }

            \Comment{/*** Compute $C_i := $  a  minimum-size legal packing with $i$. ***/}

            \uIf{$v_i > j$}
            {
            \Comment{/* We cannot add $i$ at all. */}
            
            $C_i := None$;
            }
            \uElseIf{$i$ is a leaf node and $i$ is not a first child} {
            \Comment{/* $i$ has previous siblings but no children. */}
            
                $C_i := F(i-1,j-v_i)\cup \{i\}$;
            }
            \uElseIf{$i$ is not a leaf node and $i$ is a first child}{
            \Comment{/* $i$ has children but no previous siblings; note that $c(i)<i$. */}
            
                $C_i := F(c(i), j-v_i)\cup \{i\}$;
            }
            \Else
            {
            \Comment{/* $i$ has both children and siblings; we have to divide the value among them */ }

            $\mathcal{F} := \emptyset$; \Comment{Initialization}
            
            \For{$z = 0$ \KwTo $(j - v_i)$}{
            \Comment{/* $z$ represents the value of $i$'s previous siblings. */}
            
            $F_s := F(i-1, z); F_c := F(c(i), j-v_i-z)$ \;
            $\mathcal{F} := \mathcal{F} \cup \{F_s \cup F_c \cup \{i\}$\}\;

            \Comment{/* Add to $\mathcal{F}$ the legal packing in which the value coming from the siblings is $z$.  */}
            } 
            $\displaystyle C_i := \arg \min_{C\in \mathcal{F}} d_{C}$;
            
            \Comment{/* $C_i$ has the smallest total size among all possible divisions of value */}
            }
            }
                $\displaystyle F(i,j) = \arg\min_{C \in \{C_i, C_{-i}\}} d_C$. \;
                \Comment{/* Take either $C_i$ or $C_{-i}$, whichever has a smaller total size */}
    }
    \For{$j = v_N$ \KwTo $1$}{
        \If{$F(n,j)$ is not None and $d_{F(n,j)} \leq S$}{
        \Comment{/* Found a legal packing with the highest total value */}
        
        \Return $F(n,j)$ and $j$ \;
        }
    }
\end{algorithm}

\begin{theorem}\label{thm:geographic-knapsack-integer-values}
    When $G$ is a tree and all item values, $v_1,\ldots,v_n$, are integers, \Cref{alg:SolveGeographicKnapsack} returns a legal packing of maximum value.

    If there are several such packings, the algorithm returns one with a smallest total size.
\end{theorem}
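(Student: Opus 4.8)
The plan is to prove, by strong induction on $i$ (processing $i=1,\ldots,n$ exactly as the algorithm does), that every computed cell $F(i,j)$ equals the intended quantity: a minimum-size subset $C\subseteq[i]$ with $v_C=j$ satisfying the connectivity condition \eqref{eq:connected}, or $None$ when no such subset exists. The theorem is then read off from the row $i=n$.

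First I would establish the structural facts that drive every branch of the recursion. From renumbering property~1 (farther nodes get smaller numbers) every descendant of a node has a strictly smaller index, so $p(i)>i$ for all $i$ and every child of a node, in particular $c(i)$, satisfies $c(i)<i$; from property~2 the children of any node form a consecutive block. The crux is a decomposition lemma for the feasible sets of $F(i,j)$. Since $G$ is a tree and $p(i)$ is the only adjoined vertex of index exceeding $i$, the unique tree-path from any $w\in C$ to $p(i)$ must enter $p(i)$ through one of its children of index at most $i$, i.e.\ through $i$ itself or a previous sibling of $i$. Hence a nonempty $C\subseteq[i]$ satisfies \eqref{eq:connected} if and only if it is a disjoint union $\bigcup_k R_k$, where $k$ ranges over a subset of $\{i\}\cup\{\text{previous siblings of }i\}$ and each $R_k$ is a connected subtree rooted at $k$ inside the subtree $T_k$. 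Because $v$ and $d$ are additive over these disjoint pieces, this gives the optimal-substructure property the DP relies on.

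Next I would carry out the induction. The base cases are node $1$ (a deepest leaf and, vacuously, a first child) and, generally, every leaf that is a first child: the only feasible nonempty packing is $\{i\}$ at value $v_i$, exactly what the algorithm records. For the inductive step I split on whether $i\in C$. If $i\notin C$, the decomposition says $C$ is built only from previous siblings, whose feasible region with connectivity through $p(i-1)=p(i)$ is precisely what $F(i-1,\cdot)$ stores; this is possible only when $i$ has a previous sibling, matching the $C_{-i}$ branch. If $i\in C$, I write $C=R_i\cup C^{sib}$ with $R_i\setminus\{i\}$ a union of rooted subtrees at children of $i$, connected through $p(c(i))=i$ and hence captured by $F(c(i),\cdot)$, and $C^{sib}$ captured by $F(i-1,\cdot)$. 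The algorithm's four cases --- $v_i>j$, leaf-and-not-first-child, non-leaf-and-first-child, and the general loop over the value split $z$ between siblings and children --- correspond exactly to which of the two parts may be nonempty, and minimality of $d_C$ follows from the inductive minimality of the combined pieces together with the minimization over all admissible splits $z$.

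The main obstacle I anticipate is making the decomposition lemma fully rigorous --- verifying that the connectivity requirement localizes exactly to $i$ and its previous siblings, and that the boundary cases of the renumbering (leaves, first children, and especially a node that is simultaneously a leaf and a first child) are each routed to the correct branch, so that no feasible packing is omitted and no infeasible one is admitted. I would close at $i=n$: since $n$ is the highest-indexed node it is the top child of the root, so $p(n)=s$ and $F(n,j)$ ranges over \emph{all} legal packings of value $j$, storing a minimum-size one. The concluding downward scan therefore returns the largest $j^*$ admitting a legal packing with $d_C\le S$ (if the minimum-size value-$j$ packing violates the capacity then every value-$j$ packing does), and it returns a minimum-size packing of that value --- establishing both claims of the theorem.
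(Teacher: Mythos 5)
Your proposal is correct and follows essentially the same route as the paper's own proof: induction over the DP table in the order the algorithm fills it, a case split on whether $i$ belongs to the packing that mirrors the algorithm's branches (with minimality obtained from the inductive hypothesis plus the minimization over value splits $z$), and the observation that $p(n)=s$ so that row $n$ ranges over all legal packings. Your explicit decomposition lemma---that connectivity of $C\cup\{p(i)\}$ forces $C$ to be a disjoint union of rooted subtrees at $i$ and its previous siblings---is just a packaged, slightly more rigorous form of the connectivity reasoning the paper carries out inline within each case, so the two arguments coincide in substance.
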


\begin{proof}
We prove that, when the algorithm terminates, each cell $F(i,j)$ contains a packing $C$ that satisfies $v_C = j$ and condition \ref{eq:connected} (if such a packing exists), and subject to that, the size $d_C$ is minimized.
The algorithm returns an answer from $F(n,j)$ with a largest $j$ such that $d_{F(n,j)} \leq S$.
Since $p(n) = s$, this would imply the correctness of the algorithm.

First, by checking all steps in which $F(i,j)$ is updated, one can prove (by induction on $i$) that, if $F(i,j)$ is not None, then it contains a packing with value exactly $j$.

It remains to prove that in each iteration the packing satisfies condition \ref{eq:connected}, and that its total size is minimized.

The proof is by induction on the order in which the cells in $F$ are filled, that is, we do induction on $i$, and for each $i$, we do induction on $j$.

\paragraph{Base cases.} The base cases are the cells for which $i$ is both a leaf and a first-child, 
and the cells for which $j=0$. 
They are initialized in the first For loop.

 The connectivity requirement \ref{eq:connected} is trivially satisfied, since the packings are either empty or singletons.
 For the value $j=0$, clearly the minimum-size packing is the empty packing. 
 For a first-child leaf, the only packing of $[i]$ that is connected to $p(i)$ is the singleton containing only $i$, so it is trivially the minimum-size one.
 Note that the cells filled in this case are not modified later, so their value remains correct
 
 \paragraph{Inductive step.}
        
We prove that, if all the previously-filled values in the table are correct, then the current value is correct as well.
Formally, if $F(i',j')$ is correct for all $(i',j')$ such that either ($i' < i$) or ($i' = i$ and $j' < j$), then $F(i,j)$ is correct too.
There are two cases to consider, depending on whether the packing in $F(i,j)$ contains $i$ or not.
The packing that contains $i$ is denoted $C_i$, and the packing without $i$ is denoted $C_{-i}$. 
We let $F(i,j)$ be either $C_i$ or $C_{-i}$, whichever has a smaller total size. 
Therefore, it is sufficient to prove the minimality of each of these two packings subject to the connectivity constraint \eqref{eq:connected}.

Suppose first that $F(i,j)$ is chosen to be $C_{-i}$, and consider two cases.
\begin{itemize}
    \item If $i$ is not a first child, and $i$ is not included in the packing, then the minimum-size legal packing of $[i]$ equals the minimum-size legal packing of $[i-1]$ with the same constraints and the same total value.
    Note that in this case $p(i-1) = p(i)$, so condition \eqref{eq:connected} is the same condition for $F(i-1,j)$ as for $F(i,j)$.
    Therefore, $C_{-i}$ equals $F(i-1,j)$, which is correct by the induction assumption.
    \item 
    If $i$ is a first child, and $i$ is not included in the packing, then the packing does not include any child of $p(i)$. However, condition \eqref{eq:connected} requires that the packing with $p(i)$ be connected. This means that no packing satisfying \eqref{eq:connected} exists, so $C_{-i} = None$.
\end{itemize}

Next, suppose that $F(i,j)$ is chosen to be $C_{i}$, and consider four cases.
\begin{itemize}
\item 
If $v_i >j$, then no packing containing $i$ can have a value of exactly $j$, so trivially $C_i = None$.

\item If $i$ is a leaf node and not a first child of its parent,
then the minimum-size subset of $[i]$ with value $j$ that includes $i$  must contain the minimum-size subset of $[i-1]$ with value $j' := j-v_i$ that is connected to $p(i)$.
By the induction assumption, 
and since $p(i) = p(i-1)$ 
(condition \eqref{eq:connected} is the same for $i-1$ as for $i$), 
that subset equals $F(i-1, j')$.
Hence, $C_i = F(i-1,j') \cup \{i\}$.
    
\item If $i$ is not a leaf but a first child of its parent, 
then the minimum-size subset of $[i]$ with value $j$ that includes $i$
must contain a minimum-size subset with value $j' := j-v_i$, that is connected to $i$.
That minimum-size subset is computed for the highest-indexed child of $i$, denoted by $c(i)$. The node numbering implies that $c(i)<i$. 
Note also that $p(c(i)) = i$, and $i$ is connected to $p(i)$.
Hence, by the induction assumption, that minimum-size subset equals $F(c(i), j')$,
so $C_i = F(c(i),j') \cup \{i\}$. Hence, $C_i$ satisfies the condition \ref{eq:connected} and it has smallest total size.

\item If $i$ has both children and siblings,
then the minimum-size subset that includes $i$ may contain both previous siblings of $i$ (with $i' < i$), and children of $i$ (with $i'\leq c(i) < i$).
We choose $C_i$ to be the best among all possible packings $F(i-1,j') \cup F(c(i), j'') \cup \{i\}$,
for all $j',j''$ such that $j'+j''+v_i = j$. By induction hypothesis 
$F(i-1,j') \cup \{p(i-1)\} = F(i-1,j') \cup \{p(i)\}$ is connected. 
Similarly, $F(c(i), j'') \cup p(c(i)) = F(i-1,j'') \cup \{i\}$ is also connected. 
Therefore, the union
$\{F(i-1,j') \cup F(c(i), j'') \cup \{i\} \cup \{p(i)\}\}$ is connected too (through $i$ and $p(i)$).
Hence $C_i$ satisfies condition \eqref{eq:connected}.
\end{itemize} 

Overall, we have shown that $F(i,j)$ satisfies  condition \ref{eq:connected} and it has smallest total size. This completes the induction step.

\qed
\end{proof}

\begin{theorem}
    \label{algorithm1 time complexity}
    Algorithm \ref{alg:SolveGeographicKnapsack} is a FPTAS for the Geographic Knapsack problem.
\end{theorem}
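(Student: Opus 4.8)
The plan is to verify the two defining properties of an FPTAS for the procedure built around \Cref{alg:SolveGeographicKnapsack}: that the packing it returns is a $(1-\epsilon)$-approximation of the optimal legal packing, and that the total running time is polynomial in both $n$ and $1/\epsilon$. Both properties follow by combining the two facts already established, namely the exactness of \Cref{alg:SolveGeographicKnapsack} on integer values (\Cref{thm:geographic-knapsack-integer-values}) and the rounding bound (\Cref{approx solution lemma}). Concretely, given $\epsilon\in(0,1)$, I would run \Cref{alg:SolveGeographicKnapsack} on the rounded values $r_i = \lfloor v_i/\theta\rfloor$ with $\theta = (\epsilon\cdot v_{max})/n$, while keeping the sizes $d_i$, the capacity $S$, and the graph $G$ unchanged.

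For the approximation guarantee, I would first observe that rounding alters only the values and leaves the capacity and geographic constraints untouched; hence a subset is a legal packing for the rounded instance if and only if it is legal for the original one. By \Cref{thm:geographic-knapsack-integer-values}, the algorithm therefore returns $C^r$, a legal packing of maximum value with respect to $(r_i)_i$ (the final downward scan over $j$ selects exactly the largest achievable rounded value whose minimum-size witness respects the capacity $S$). Letting $C^*$ denote an optimal legal packing with respect to the original values, \Cref{approx solution lemma} immediately yields $\sum_{i\in C^r} v_i \ge (1-\epsilon)\sum_{i\in C^*} v_i$, which is precisely the desired $(1-\epsilon)$-approximation.

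For the running time, the key quantity is the number of value-columns, $v_N = \sum_i r_i$. Since $r_i \le v_i/\theta \le v_{max}/\theta = n/\epsilon$, summing over the $n$ items gives $v_N \le n^2/\epsilon$. The table $F$ thus has $O(n\cdot v_N) = O(n^3/\epsilon)$ cells. I would store in each cell only the minimum total size $d_C$ together with a back-pointer, so that every set union appearing in the recurrence reduces to an $O(1)$ addition of sizes (the sets $F(i-1,z)$, $F(c(i),\cdot)$ and $\{i\}$ are disjoint by construction). The dominant per-cell cost is the inner loop over $z$ in the case where $i$ has both children and previous siblings, which runs $O(v_N)$ iterations of $O(1)$ work. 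Hence filling the table costs $O(n\cdot v_N^2) = O(n^5/\epsilon^2)$, while reconstructing the returned packing and the final scan over $j$ are lower-order. This is polynomial in $n$ and $1/\epsilon$.

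Putting these together shows that for every $\epsilon\in(0,1)$ the procedure outputs a $(1-\epsilon)$-approximate legal packing in time polynomial in $n$ and $1/\epsilon$, which is exactly an FPTAS. I expect the only genuinely delicate step to be the running-time bookkeeping—in particular, bounding $v_N$ after rounding and arguing that storing sizes-with-pointers (rather than explicit subsets) keeps the per-cell work at $O(v_N)$ instead of $O(n\cdot v_N)$. The approximation half is a direct splicing of \Cref{thm:geographic-knapsack-integer-values} and \Cref{approx solution lemma}, contingent only on the observation that rounding preserves legality.
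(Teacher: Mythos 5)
Your proposal is correct and follows essentially the same route as the paper's proof: combine \Cref{thm:geographic-knapsack-integer-values} with the rounding bound of \Cref{approx solution lemma} for the $(1-\epsilon)$ guarantee, and bound $v_N \le n^2/\epsilon$ to show the table-filling time is polynomial in $n$ and $1/\epsilon$. Your treatment is in fact slightly more careful than the paper's, which only asserts that each cell takes time polynomial in $v_N$, whereas you make the per-cell cost $O(v_N)$ explicit via the sizes-with-back-pointers representation and note that rounding preserves legality of packings.
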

\begin{proof}
    By \Cref{thm:geographic-knapsack-integer-values} and \Cref{approx solution lemma}, the value of the returned packing is at least $(1-\epsilon)\cdot OPT$ where $OPT$ is the value of the optimal packing.

    It is left to prove that the run-time is polynomial in $n$ and $1/\epsilon$.

    We first note that the outer loop iterates from $1$ to $n$, and the inner loop iterates from $1$ to $v_N$. 
    
    The upper bound on $v_N$ is $n$ times the highest value. Since we use the rounded values $r_i:= \lfloor v_{i} / \theta \rfloor$, the highest values we used is $r_{max} := \lfloor v_{max} / \theta \rfloor$. 
    As $\theta=\epsilon \cdot v_{max} /n$, we get that $r_{max} = \lfloor n / \epsilon \rfloor$.
    Thus, $v_N = n \cdot r_{max}$ is at most $n \cdot r_{max} \leq n \cdot n / \epsilon = n^2\cdot 1/ \epsilon$. 
    
    In addition, each cell takes time polynomial in $v_N$ to compute. 
    Therefore, the total computation time is polynomial in $n$ and $1/\epsilon$.
    \qed
\end{proof}


\section{Conclusion and Future work}

This paper studies the problem of fairly distributing electricity among households when their collective demand exceeds the available supply. We focused on egalitarianism, which seeks to achieve fairness by prioritizing the utilities of the worst-off agents. Specifically, we examined the egalitarian rule, which selects an allocation that maximizes the minimum utility, and its generalization, the leximin rule, which further breaks ties by maximizing the second-lowest utility, then the third-lowest, and so on.

We prove that computing an egalitarian-optimal allocation is NP-hard even under simplified assumptions.
On a more positive note, we provide a Fully Polynomial-Time Approximation Scheme (FPTAS) for computing a leximin-optimal allocation when the connectivity graph is a tree. Our approach combines a reduction to a variant of the knapsack problem with Geographic constraints and an adaptation of existing approximation techniques.

This work opens up several directions for further research.

\paragraph{General Graphs.} Our current FPTAS relies on the assumption that the connectivity graph is a tree. An important open question is whether the algorithm can be extended to more general graphs. We conjecture that this could be done by adapting recent techniques developed by Dey, Kolay, and Singh \cite{dey2024knapsack}.

\paragraph{General utilities.} 
The algorithm we present assumes that agents have uniform and additive utilities. Without this assumption, we cannot directly apply the reduction of Hartman et al. \cite{hartman2025reducing}. Can we approximate leximin for broader classes of utilities, such as piecewise-uniform or piecewise-constant functions?

\paragraph{Contiguous Allocations.} In our model, agents may receive disconnected time intervals. An interesting direction, often considered in the cake-cutting literature, is to study fair allocation under the additional requirement that each agent receives a single contiguous interval.

\paragraph{Beyond Egalitarianism.} It would be interesting to explore other solution concepts, such as maximizing the Nash welfare (the product of utilities), 
as well as fairness notions such as proportionality, envy-freeness, or maximin share (MMS). What is the best we can do under these criteria?

\paragraph{Incorporating Priorities.}
In many applications, priorities play a key role in determining which demands should be served first. 
Our setting can be viewed through this lens: the geographic constraints naturally induce priority constraints, where nodes closer to the source represent more urgent or essential needs.
We believe this direction has wide applicability. For instance:
\begin{itemize}
    \item In smart home energy systems, appliances may be prioritized based on user preferences or criticality, leading to a tree-like dependency structure over time.
    
    \item In emergency response, resources must be allocated geographically, with preference given to regions closer to the source or to areas with higher urgency.
\end{itemize}
It would be interesting to extend our framework to handle other forms of priority constraints beyond graph-based ones.

\section{Acknowledgments}
This research is partly supported by the Israel Science Foundation grants 712/20, 3007/24, 2697/22 and 1092/24.

We sincerely appreciate Yonatan Aumann and Avinatan Hassidim for their valuable insights and discussion.

\bibliographystyle{splncs04}
\bibliography{references,references_erel,references-dinesh}

\begin{thebibliography}{10}
\providecommand{\url}[1]{\texttt{#1}}
\providecommand{\urlprefix}{URL }
\providecommand{\doi}[1]{https://doi.org/#1}

\bibitem{Akasiadis_Chalkiadakis_2017}
Akasiadis, C., Chalkiadakis, G.: Mechanism design for demand-side management.
  IEEE Intelligent Systems  \textbf{32}(1),  24–31 (Jan 2017).
  \doi{10.1109/MIS.2017.6}

\bibitem{Ali_Mansoor_Khan_Arshad_Faizullah_Khan_2021}
Ali, S., Mansoor, H., Khan, I., Arshad, N., Faizullah, S., Khan, M.A.: Fair
  allocation based soft load shedding. Advances in Intelligent Systems and
  Computing  \textbf{1251 AISC},  407–424 (2021).
  \doi{10.1007/978-3-030-55187-2_32}, arXiv: 2002.00451 Citation Key: Ali2021
  ISBN: 9783030551865

\bibitem{Azasoo_Kanakis_Al-Sherbaz_Agyeman_2019}
Azasoo, J.Q., Kanakis, T., Al-Sherbaz, A., Agyeman, M.O.: Improving electricity
  network efficiency and customer satisfaction in generation constrained power
  system. 2019 6th International Conference on Control, Decision and
  Information Technologies, CoDIT 2019 p. 2010–2015 (2019).
  \doi{10.1109/CoDIT.2019.8820482}, citation Key: Azasoo2019 ISBN:
  9781728105215

\bibitem{Baghel_Ravsky_Segal-Halevi_2024}
Baghel, D.K., Ravsky, A., Segal-Halevi, E.: k-Times Bin Packing and its
  Application to Fair Electricity Distribution, Lecture Notes in Computer
  Science, vol. 15156, p. 483–500. Springer Nature Switzerland, Cham (2024).
  \doi{10.1007/978-3-031-71033-9_27},
  \url{https://link.springer.com/10.1007/978-3-031-71033-9_27}

\bibitem{Baghel_Ravsky_Segal-Halevi_2025}
Baghel, D.K., Ravsky, A., Segal-Halevi, E.: $k$-times bin packing and its
  application to fair electricity distribution (arXiv:2311.16742) (Jan 2025).
  \doi{10.48550/arXiv.2311.16742}, \url{http://arxiv.org/abs/2311.16742},
  arXiv:2311.16742 [cs]

\bibitem{electricity-grid}
Brain, M.: How power grids work,
  \url{https://www.science.smith.edu/~jcardell/Courses/EGR220/ElecPwr_HSW.html},
  [Accessed 23-May-2025]

\bibitem{Buermann_Gerding_Rastegari_2020}
Buermann, J., Gerding, E.H., Rastegari, B.: Fair allocation of resources with
  uncertain availability. Proceedings of the International Joint Conference on
  Autonomous Agents and Multiagent Systems, AAMAS  \textbf{2020-May}(Aamas),
  204–212 (2020), citation Key: Buermann2020

\bibitem{dey2024knapsack}
Dey, P., Kolay, S., Singh, S.: Knapsack: Connectedness, path, and
  shortest-path. In: Latin American Symposium on Theoretical Informatics. pp.
  162--176. Springer (2024)

\bibitem{Gerding_Perez-Diaz_Aziz_Gaspers_Marcu_Mattei_Walsh_2019}
Gerding, E.H., Perez-Diaz, A., Aziz, H., Gaspers, S., Marcu, A., Mattei, N.,
  Walsh, T.: Fair online allocation of perishable goods and its application to
  electric vehicle charging. In: Proceedings of the Twenty-Eighth International
  Joint Conference on Artificial Intelligence. p. 5569–5575. International
  Joint Conferences on Artificial Intelligence Organization, Macao, China (Aug
  2019). \doi{10.24963/ijcai.2019/773},
  \url{https://www.ijcai.org/proceedings/2019/773}

\bibitem{hartman2025reducing}
Hartman, E., Aumann, Y., Hassidim, A., Segal-Halevi, E.: Reducing leximin
  fairness to utilitarian optimization. In: Proceedings of the AAAI Conference
  on Artificial Intelligence. vol.~39, pp. 13905--13914 (2025)

\bibitem{IEA_Nigeria_Electricity_2022}
{International Energy Agency}: Nigeria: Electricity (2025),
  \url{https://www.iea.org/countries/nigeria/electricity}, accessed: 2025-05-26

\bibitem{IEA_UK_Electricity_2022}
{International Energy Agency}: United kingdom: Electricity (2025),
  \url{https://www.iea.org/countries/united-kingdom/electricity}, accessed:
  2025-05-26

\bibitem{Janjua_Ali_Kallu_Ibrahim_Zafar_Kim_2021}
Janjua, S., Ali, M.U., Kallu, K.D., Ibrahim, M.M., Zafar, A., Kim, S.: A
  game-theoretic approach for electric power distribution during power
  shortage: A case study in pakistan. Applied Sciences (Switzerland)
  \textbf{11}(11) (2021). \doi{10.3390/app11115084}, citation Key: Janjua2021

\bibitem{Kaygusuz_2012}
Kaygusuz, K.: Energy for sustainable development: A case of developing
  countries. Renewable and Sustainable Energy Reviews  \textbf{16}(2),
  1116–1126 (2012). \doi{10.1016/j.rser.2011.11.013}, citation Key:
  Kaygusuz2012

\bibitem{Oluwasuji_Malik_Zhang_Ramchurn_2018}
Oluwasuji, O.I., Malik, O., Zhang, J., Ramchurn, S.D.: Algorithms for fair load
  shedding in developing countries. In: Proceedings of the Twenty-Seventh
  International Joint Conference on Artificial Intelligence. p. 1590–1596.
  International Joint Conferences on Artificial Intelligence Organization,
  Stockholm, Sweden (Jul 2018). \doi{10.24963/ijcai.2018/220},
  \url{https://www.ijcai.org/proceedings/2018/220}

\bibitem{Oluwasuji_Malik_Zhang_Ramchurn_2020}
Oluwasuji, O.I., Malik, O., Zhang, J., Ramchurn, S.D.: Solving the fair
  electric load shedding problem in developing countries. Autonomous Agents and
  Multi-Agent Systems  \textbf{34}(1), ~12 (Apr 2020).
  \doi{10.1007/s10458-019-09428-8}

\bibitem{Oyedepo_2012}
Oyedepo, S.O.: Energy and sustainable development in nigeria: the way forward
  sustainable energy renewable energy energy efficiency energy conservation
  review background. Energy, Sustainability and Society p. 1–17 (2012)

\bibitem{Pahwa_Scoglio_Das_Schulz_2013}
Pahwa, S., Scoglio, C., Das, S., Schulz, N.: Load-shedding strategies for
  preventing cascading failures in power grid. Electric Power Components and
  Systems  \textbf{41}(9),  879–895 (2013).
  \doi{10.1080/15325008.2013.792884}

\bibitem{Shi_Liu_2015}
Shi, B., Liu, J.: Decentralized control and fair load-shedding compensations to
  prevent cascading failures in a smart grid. International Journal of
  Electrical Power and Energy Systems  \textbf{67},  582–590 (2015).
  \doi{10.1016/j.ijepes.2014.12.041}, citation Key: Shi2015

\bibitem{StatistaNigeriaPopulation2022}
{Statista}: Nigeria: population 1950-2024 (2025),
  \url{https://www.statista.com/statistics/1122838/population-of-nigeria/},
  accessed: 2025-05-26

\bibitem{StatistaUKPopulation2022}
{Statista}: United kingdom: Total population from 1887 to 2023 (2025),
  \url{https://www.statista.com/statistics/281296/uk-population/}, accessed:
  2025-05-26

\bibitem{Stein_Gerding_Robu_Jennings_2012}
Stein, S., Gerding, E., Robu, V., Jennings, N.R.: A model-based online
  mechanism with pre-commitment and its application to electric vehicle
  charging. 11th International Conference on Autonomous Agents and Multiagent
  Systems 2012, AAMAS 2012: Innovative Applications Track  \textbf{2}(June),
  568–575 (2012), citation Key: Stein2012

\bibitem{vazirani2001approximation}
Vazirani, V.V.: Approximation algorithms, vol.~1. Springer (2001)

\end{thebibliography}
\end{document}